\newtheorem{corollary}{Corollary}
\newtheorem{proposition}{Proposition}
\newtheorem{remark}{Remark}
\newenvironment{proof}[1][Proof]{\noindent\textbf{#1.} }{\ \rule{0.5em}{0.5em}}
\begin{document}

\title{A Divide and Conquer Approach to Cooperative Distributed Model
Predictive Control}
\author{He~Kong,~Stefano~Longo,~Gabriele~Pannocchia,~Efstathios~Siampis,
and~Lilantha~Samaranayake \thanks{This work has been supported by the \textquotedblleft Developing FUTURE
Vehicles\textquotedblright\ project of the Engineering and Physical Sciences
Research Council under the UK Low Carbon Vehicles Integrated Delivery
Programme.}\thanks{%
He Kong was with the Advanced Vehicle Engineering Centre, Cranfield
University, United Kingdom; he is now with the Australian Centre for Field
Robotics, the University of Sydney, NSW, Australia, Email:
h.kong@acfr.usyd.edu.au. Stefano Longo and Efstathios Siampis are with the Advanced Vehicle Engineering Centre, Cranfield
University, College Road, Cranfield, United Kingdom. Email: {s.longo,
e.siampis}@cranfield.ac.uk. Gabriele Pannocchia is with
the Department of Civil and Industrial Engineering, University of Pisa,
Italy, Email: gabriele.pannocchia@unipi.it. Lilantha Samaranayake is with Department of Electrical and Electronic Engineering, University of Peradeniya, Sri Lanka, Email: lilantha@ee.pdn.ac.lk }}
\maketitle

\begin{abstract}
This paper is concerned with the design of cooperative distributed Model
Predictive Control (MPC) for linear systems. Motivated by the special
structure of the distributed models in some existing literature, we propose
to apply a state transformation to the original system and global cost
function. This has major implications on the closed-loop stability analysis
and the mechanism of the resultant cooperative framework. It turns out that
the proposed framework can be implemented without cooperative iterations
being performed in the local optimizations, thus allowing one to compute the
local inputs in parallel and independently from each other while requiring
only partial plant-wide state information. The proposed framework can also
be realized with cooperative iterations, thereby keeping the advantages of
the technique in the former reference. Under certain conditions, closed-loop
stability for both implementation procedures can be guaranteed a priori by
appropriate selections of the original local cost functions. The strengths
and benefits of the proposed method are highlighted by means of two
numerical examples.
\end{abstract}



\markboth{}{Shell \MakeLowercase{\textit{et al.}}: A Divide and Conquer Approach to Cooperative Distributed Model
Predictive Control}



\begin{IEEEkeywords}
Model predictive control, Distributed control, Cooperative control, Constrained control
\end{IEEEkeywords}

\IEEEpeerreviewmaketitle

\section{Introduction}

MPC relies on a dynamic model of the system of interest to predict its
behavior into the future, and solves, at each sampling instant, a {finite}
horizon optimization problem to determine an input sequence, while taking
the system's constraints into account \cite{Mayne2000}-\cite{Mayne2014}.
This paper is inspired by the recent rigorous development in decentralized
and distributed MPC \cite{Pannocchia2009}-\cite{Ling2012}. Depending on the
particular problem setups and associated solutions, existing distributed MPC
methods could be classified in ways such as decentralized or distributed,
cooperative or noncooperative \cite{Keviczky2006}-\cite{Cannon2015}.
Analysis and design of distributed MPC with network-induced effects has also
received a considerable amount of attention \cite{Parisini2011}-\cite%
{Shi2013}. For more detailed discussions on the properties of different
distributed MPC methods, one can refer to \cite[Chp.~6]{Madison2009}, \cite%
{Scattolini}-\cite{Pannocchia2015}.

This paper is concerned with the questions on whether and how the
organizational system architecture can be utilized in the design of
cooperative distributed MPC to facilitate the computation of local inputs
and reduce the communication burden with stability guarantee. The importance
of exploiting structural properties in synthesis and design of large-scale
systems has been demonstrated in a number of works in control \cite%
{Rotkowitz2006}-\cite{Ishizaki2014}. Another long-learned lesson is that
understanding of architectural issues in controller design are frequently
more important than optimizing within a given framework \cite[{Chp}. 1, pp.
10-18]{CSD}. A major motivation for developing the ideas in this paper is
close in spirit to the above observations. To illustrate the concept, we
adopt the problem setup in \cite{Rawlings2010}. We advocate the utilization
of a state transformation to the considered system and the original global
cost function so that the intrinsic coupling effects in the original
cooperative question are restructured and can be dealt with more effectively.

{The} proposed framework in this paper has the following advantages.
Firstly, the proposed method can be implemented 
without {cooperative} iterations, if certain conditions hold (see in
Propositions \ref{prop}-\ref{prop2}), with guarantee of global closed-loop
stability. {This} allows one to compute local inputs in parallel and
independently from each other, requiring only partial state information of
other subsystems (besides the subsystem matrices information). This is in
contrast to {the standard cooperative MPC formulations \cite{Rawlings2010}
in which} the computation of the local input requires plant-wide state
information and inputs of other local systems, or at least of an \emph{%
augmented local }system state \cite{Razzanelli2016}. Second, the proposed
framework can also be realized with iterations as in \cite{Rawlings2010},
thereby keeping the merits of the standard cooperative method with iterations%
{, e.g. convergence to the global centralized performance. }Given the impact
of the afore-mentioned state transformation on the original problem setup
and the features of the proposed framework, we term the method the \textit{%
Divide and Conquer }(D\&C)\textit{\ }approach.

The remainder of this paper is organized as follows. In Section \ref{lyap},
preliminaries are given. Section \ref{sgscopy} introduces the state
transformation and analyzes its impact on the solution to the cooperative {%
problem}. Section \ref{sgscopy2} contains stability analysis with state
feedback for both the centralized and the distributed cases. Section \ref%
{sgscopy3} presents extensions and comparison discussions. Section \ref%
{example} contains two simulation case studies. Section \ref{conclusion}
concludes the paper. \textbf{Notation}: Most notation we use is standard.
Matrices, whose dimensions are not stated, are assumed to be compatible for
algebraic operations. $\mathscr{I}_{s_{1}:s_{2}\text{ }}$denote the set of $%
\left\{ s_{1},s_{1}+1,\cdots ,s_{2}\right\} $. $\mathbf{U}^{N}$ denotes the
Cartesian product of set $\mathbf{U}$ for $N$ times. $diag(A,B)$ denotes a
block diagonal matrix with $A$ and $B$ as its block entries. $[v_{1},\cdots
,v_{n}]$ denotes $[v_{1}^{\mathrm{T}}\cdots v_{n}^{\mathrm{T}}]^{\mathrm{T}}$%
.

\section{\label{lyap}Preliminaries}

This section recalls the problem setup from \cite{Rawlings2010}. For
simplicity, we assume that the overall plant consists of only two
subsystems. The results will be extended to any finite number of subsystems
in Section \ref{sgscopy3}. Assume for $(i,j)$ $\in \mathscr{I}_{1:2}\times %
\mathscr{I}_{1:2},$ each subsystem $i$ is a collection of the linear
discrete-time models%
\begin{equation}
x_{ij}^{+}=A_{ij}x_{ij}+B_{ij}u_{j}\text{, \ }y_{i}=\sum\limits_{j\in %
\mathscr{I}_{1:2}}C_{ij}x_{ij}  \label{sub}
\end{equation}%
where $A_{ij}\in \mathbf{R}^{n_{ij}\times n_{ij}}$, $B_{ij}\in \mathbf{R}%
^{n_{ij}\times m_{j}}$, $C_{ij}\in \mathbf{R}^{p_{i}\times n_{ij}}$ are
constant system matrices; $u_{j}\in \mathbf{R}^{n_{ij}}$ denotes the effects
of input of subsystem $j$ on the states of subsystem $i$. Denote $%
x_{1}=[x_{11},x_{12}].$ By collecting the states of subsystem $1$ from (\ref%
{sub}), we can obtain:
\begin{equation}
x_{1}^{+}=A_{1}x_{1}+\overline{B}_{11}u_{1}+\overline{B}_{12}u_{2}\text{, \ }%
y_{1}=C_{1}x_{1}  \label{localsystem}
\end{equation}%
in which $A_{1}\in \mathbf{R}^{n_{1}\times n_{1}},$ $\overline{B}_{1j}\in
\mathbf{R}^{n_{1}\times m_{j}},$ $n_{1}=n_{11}+n_{12}$, $%
A_{1}=diag(A_{11},A_{12}),\overline{B}_{11}=[B_{11},0],\overline{B}%
_{12}=[0,B_{12}],C_{1}=\left[
\begin{array}{cc}
C_{11} & C_{12}%
\end{array}%
\right] .$ The model of subsystem 2 can be obtained similarly. The
plant-wide model then becomes%
\begin{equation}
x^{+}=Ax+B_{1}u_{1}+B_{2}u_{2}\text{, \ }y=Cx  \label{system}
\end{equation}%
where $x=[x_{1},x_{2}],$ $y=[y_{1},y_{2}],B_{1}=[\overline{B}_{11},\overline{%
B}_{21}],B_{2}=[\overline{B}_{12},\overline{B}_{22}],$ $A=diag(A_{1},A_{2}),$
and $C=diag(C_{1},C_{2})$. As remarked in \cite{Rawlings2010}, the model (%
\ref{system}) is potentially non-minimal. Denote $\mathbf{u}_{1}$ and $%
\mathbf{u}_{2}$ the local input sequences for the two subsystems along the
prediction horizon, respectively. We define the cost
\begin{equation}
\begin{array}{c}
V_{1}(x_{1}(0),\mathbf{u}_{1},\mathbf{u}_{2})=\sum\limits_{k=0}^{N-1}[x_{1}^{%
\mathrm{T}}(k)Q_{1}x_{1}(k)+u_{1}^{\mathrm{T}}(k)R_{1}u_{1}(k)] \\
+x_{1}^{\mathrm{T}}(N)P_{1}x_{1}(N)%
\end{array}
\label{localcost}
\end{equation}%
for subsystem 1 and $V_{2}(x_{2}(0),\mathbf{u}_{1},\mathbf{u}_{2})$ for
subsystem 2 similarly. The plant-wide cost function is $\mathscr{V}(x(0),%
\mathbf{u}_{1},\mathbf{u}_{2})=\rho _{1}V_{1}(x_{1}(0),\mathbf{u}_{1},%
\mathbf{u}_{2})+\rho _{2}V_{2}(x_{2}(0),\mathbf{u}_{1},\mathbf{u}_{2})$,
where $\rho _{1}$ and $\rho _{2}$ are positive real numbers. For simplicity,
denote $V_{1},V_{2},\mathscr{V}$ as the local and global cost functions,
respectively. $\mathscr{V}$ can be expressed in terms of the original
centralized model (\ref{system}) as:%
\begin{equation}
\mathscr{V}=\sum\limits_{k=0}^{N-1}[x^{\mathrm{T}}(k)Qx(k)+u^{\mathrm{T}%
}(k)Ru(k)]+x^{\mathrm{T}}(N)Px(N)  \label{originalcost}
\end{equation}%
with $u(k)=[u_{1}(k),u_{2}(k)]$, $Q=diag(\rho _{1}Q_{1},\rho
_{2}Q_{2}),R=diag(\rho _{1}R_{1},\rho _{2}R_{2}),P=diag(\rho _{1}P_{1},\rho
_{2}P_{2}).$ 
{We} assume that the local inputs must satisfy $u_{1}(k)\in \mathbf{U}%
_{1},u_{2}(k)\in \mathbf{U}_{2},$ for $k\in \mathscr{I}_{0:N-1}$, where both
$\mathbf{U}_{1}$ and $\mathbf{U}_{2}$ are compact and convex sets that
include the origin in their interior. For the case of decoupled input
constraints, the contribution of either local input on $\mathscr{V}$ cannot
be affected by the other {one} in terms of feasibility. However, both $V_{1}$
and $V_{2}$ are functions of $\mathbf{u}_{1}$ and $\mathbf{u}_{2}$ because
dynamics of both subsystems are dependent on either of the two local inputs.
Thus, the computation of the optimal input sequence $\mathbf{u}_{1}$ will
still be dependent on $\mathbf{u}_{2}$, and {viceversa}. Therefore, for $%
i\in \mathscr{I}_{1:2},$ the local optimization problems of computing $%
\mathbf{u}_{i}$ will be%
\begin{equation}
\min\limits_{\mathbf{u}_{i}}\mathscr{V}_{i}\text{ }s.t.\text{ (\ref{system}%
), }u_{i}(k)\in \mathbf{U}_{i},\text{ for }k\in \mathscr{I}_{0:N-1},
\label{localcom}
\end{equation}%
where, $\mathscr{V}_{i}=\sum\limits_{k=0}^{N-1}[x^{\mathrm{T}%
}(k)Qx(k)+u_{i}^{\mathrm{T}}(k)\rho _{i}R_{i}u_{i}(k)]+x^{\mathrm{T}%
}(N)Px(N).$ Depending on the method, other requirements have to be enforced
in (\ref{localcom}) for establishing closed-loop stability. For example, in
\cite{Rawlings2010}, the prediction horizon $N$ has to be sufficiently long
to zero the unstable system modes. {In other cases \cite%
{Limon,Razzanelli2016,Stewart2011}, the usual terminal penalty and terminal
region constraints are adopted. Closed-loop stability is typically proved by
adopting the suboptimal strategy in \cite{suboptimal}. }For $i\in \mathscr{I}%
_{1:2},$ denote%
\begin{equation}
\overline{A}_{i}=diag(A_{1i},A_{2i}),\text{ }\widetilde{B}%
_{i}=[B_{1i},B_{2i}],  \label{ABT}
\end{equation}%
with $A_{1i},A_{2i},B_{1i},B_{2i}$ being defined in (\ref{sub})-(\ref{system}%
). Obviously, we have $\overline{A}_{i}\in \mathbf{R}^{\overline{n}%
_{i}\times \overline{n}_{i}},\widetilde{B}_{i}\in \mathbf{R}^{\overline{n}%
_{i}\times m_{i}},$ where%
\begin{equation}
\overline{n}_{i}=n_{1i}+n_{2i}.  \label{SS}
\end{equation}%
For $i\in \mathscr{I}_{1:2}$, the following assumptions are made in \cite%
{Rawlings2010} to establish stability: (\textbf{A1}) $(\overline{A}_{i},%
\widetilde{B}_{i})$ is stabilizable; (\textbf{A2}) $P_{i}>0,R_{i}>0,Q_{i}%
\geq 0$; (\textbf{A3}) $(A_{i},Q_{i})$ and $(A_{i},C_{i})$ are detectable.

\section{\label{sgscopy}The D\&C Approach}

\subsection{Dividing the system by a state transformation}

From (\ref{system}), one can notice that the model is of a special structure
in the sense that each local input has only partial impact on both local
systems. To be specific, from (\ref{sub}), for $i=1$, one has $\overline{A}%
_{1}=diag(A_{11},A_{21}),$ $\widetilde{B}_{1}=[B_{11},B_{21}]$, i.e., $u_{1}$
($u_{2}$) can affect $x_{11}$ and $x_{21}$ ($x_{12}$ and $x_{22}$). Given
that $(\overline{A}_{i},\widetilde{B}_{i})$ is supposed to be stabilizable
in Assumption A1, there exists $K_{i}$ such that $\overline{A}_{i}+%
\widetilde{B}_{i}K_{i}$ is Schur stable, i.e., to stabilize the system with
state feedback, for computing $u_{1}$ ($u_{2}$), one only needs the
information related to $x_{11}$ and $x_{21}$ ($x_{12}$ and $x_{22}$). This
motivates us to separate the modes of the subsystems into two categories
that can only be affected by $u_{1}$ or $u_{2}$, respectively. As such, we
introduce a state transformation which renders a clearer inspection of the
inherent structures of the original cooperative question (\ref{localcom}).
Denote%
\begin{equation}
T=\left[
\begin{array}{cccc}
I_{n_{11}} & 0 & 0 & 0 \\
0 & 0 & I_{n_{12}} & 0 \\
0 & I_{n_{21}} & 0 & 0 \\
0 & 0 & 0 & I_{n_{22}}%
\end{array}%
\right] .  \label{texpress}
\end{equation}%
One then has $T^{\mathrm{T}}T=I_{n_{1}+n_{2}}$\ and $T^{-1}=T^{\mathrm{T}}$
with $n_{1}$ being defined in (\ref{localsystem}) and $n_{2}$ defined
similarly. Denoting and substituting%
\begin{equation}
x=T\overline{x}  \label{transformation}
\end{equation}%
into the original plant model (\ref{system}) gives us%
\begin{equation}
\overline{x}^{+}=\overline{A}\overline{x}+\overline{B}_{1}u_{1}+\overline{B}%
_{2}u_{2},  \label{reformu}
\end{equation}%
in which $\overline{A}=T^{-1}AT,\quad \overline{B}_{1}=T^{-1}B_{1},\quad
\overline{B}_{2}=T^{-1}B_{2}$. Substituting $T$ into (\ref{reformu}) gives $%
\overline{A}=diag(\overline{A}_{1},\overline{A}_{2}),$ $\overline{B}_{1}=[%
\widetilde{B}_{1},0],$ $\overline{B}_{2}=[0,\widetilde{B}_{2}],$ with $%
\overline{A}_{i}$ and $\widetilde{B}_{i}$ being defined in (\ref{ABT}). With
(\ref{transformation}), the plant state has been rearranged as $\overline{x}%
=[\overline{x}_{1},\overline{x}_{2}]$ such that%
\begin{equation}
\overline{x}_{1}^{+}=\overline{A}_{1}\overline{x}_{1}+\widetilde{B}%
_{1}u_{1},\quad \overline{x}_{2}^{+}=\overline{A}_{2}\overline{x}_{2}+%
\widetilde{B}_{2}u_{2},  \label{nominal}
\end{equation}%
where $\overline{x}_{1}=[\overline{x}_{11},\overline{x}_{21}]\in \mathbf{R}^{%
\overline{n}_{1}},$ $\overline{x}_{2}=[\overline{x}_{21},\overline{x}%
_{22}]\in \mathbf{R}^{\overline{n}_{2}}$ with $\overline{n}_{1}$ and $%
\overline{n}_{2}$ being defined in (\ref{SS}). It becomes evident that the
original centralized model (\ref{system}) has been divided into two parts
that are only influenced by either $u_{1}$ or $u_{2}$. However, important
questions remain as how (\ref{transformation}) impacts the cooperative cost
function in (\ref{localcom}). Besides, to apply the proposed method, one
does not have to restructure the plant physically, since (\ref%
{transformation}) is only to be utilized in the {local optimization
algorithms}.

\subsection{Reformulating the cost by the state transformation}

Substituting $x=T\overline{x}$ into (\ref{localcom}) gives us:%
\begin{equation}
\min\limits_{\mathbf{u}_{i}}\overline{\mathscr{V}}_{i}\text{ }s.t.\text{ (%
\ref{reformu}), }u_{i}(k)\in \mathbf{U}_{i},\text{ for }k\in \mathscr{I}%
_{0:N-1},  \label{newcost}
\end{equation}%
where, $\overline{\mathscr{V}}_{i}=\sum\limits_{k=0}^{N-1}[\overline{x}^{%
\mathrm{T}}(k)\overline{Q}\overline{x}(k)+u_{i}^{\mathrm{T}}(k)\rho
_{i}R_{i}u_{i}(k)]+\overline{x}^{\mathrm{T}}(N)\overline{P}\overline{x}(N)$,%
\begin{equation}
\overline{Q}=T^{-1}QT\geq 0,\text{ }\overline{P}=T^{-1}PT>0.  \label{QPnew}
\end{equation}%
Structures of $\overline{Q}$ and $\overline{P}$ depend on the original
structures of $Q_{i},P_{i}$ and matrix $T$. For $i,j\in \mathscr{I}_{1:2}$,
denote%
\begin{equation}
Q_{i}=\left[
\begin{array}{cc}
Q_{i,1} & Q_{i,\ast } \\
Q_{i,\ast }^{\mathrm{T}} & Q_{i,2}%
\end{array}%
\right] \geq 0,\text{ }P_{i}=\left[
\begin{array}{cc}
P_{i,1} & P_{i,\ast } \\
P_{i,\ast }^{\mathrm{T}} & P_{i,2}%
\end{array}%
\right] >0,  \label{saff}
\end{equation}%
with $Q_{i,j},P_{i,j}\in \mathbf{R}^{n_{ij}\times n_{ij}},Q_{i,\ast
},P_{i,\ast }\in \mathbf{R}^{n_{i1}\times n_{i2}}.$

\begin{remark}
\label{rmt1}Given $Q_{i}\geq 0,P_{i}>0,$ it follows that $%
Q_{i,1},Q_{i,2}\geq 0,$ $P_{i,1},P_{i,2}>0,$ for $i\in \mathscr{I}_{1:2}.$
\end{remark}

From (\ref{originalcost}), (\ref{texpress}) and (\ref{QPnew}), after matrix
manipulations, we have%
\begin{equation}
\overline{Q}=\left[
\begin{array}{cc}
\overline{Q}_{11} & \overline{Q}_{12} \\
\overline{Q}_{12}^{\mathrm{T}} & \overline{Q}_{22}%
\end{array}%
\right] \geq 0,\text{ }\overline{P}=\left[
\begin{array}{cc}
\overline{P}_{11} & \overline{P}_{12} \\
\overline{P}_{12}^{\mathrm{T}} & \overline{P}_{22}%
\end{array}%
\right] >0,  \label{weights}
\end{equation}%
where, $\overline{Q}_{12}=diag(\rho _{1}Q_{1,\ast },\rho _{2}Q_{2,\ast }),%
\overline{P}_{12}=diag(\rho _{1}P_{1,\ast },\rho _{2}P_{2,\ast })$, and for $%
i=j$, $\overline{Q}_{ij}=diag(\rho _{1}Q_{1,i},\rho _{2}Q_{2,i}),$ $%
\overline{P}_{ij}=diag(\rho _{1}P_{1,i},\rho _{2}P_{2,i}),$ in which, $%
\overline{Q}_{ij},\overline{P}_{ij}\in \mathbf{R}^{\overline{n}_{i}\times
\overline{n}_{j}},$ with $\overline{n}_{1}$ and $\overline{n}_{2}$ as
defined in (\ref{SS}). From Remark \ref{rmt1}, we have $\overline{Q}%
_{ij}\geq 0,$ $\overline{P}_{ij}>0,$ when $i=j,$ for $i,j\in \mathscr{I}%
_{1:2}$. We arrive at an alternative formulation of the optimization
problems (\ref{newcost}). We believe this formulation is insightful because
it shows clearly the impact of the individual local inputs on the global
cost. To illustrate, the optimization problem for subsystem 1 now becomes%
\begin{equation}
\min\limits_{\mathbf{u}_{1}}\overline{\mathscr{V}}_{1}\text{ }s.t.\text{ (%
\ref{reformu}), }u_{1}(k)\in \mathbf{U}_{1},\text{ }k\in \mathscr{I}_{0:N-1},
\label{finalcost}
\end{equation}%
where, $\overline{\mathscr{V}}_{1}=\overline{\mathscr{V}}_{1}^{a}+\overline{%
\mathscr{V}}_{1}^{b}+\overline{\mathscr{V}}_{1}^{c},$ with $\overline{%
\mathscr{V}}_{1}^{a}=\overline{x}_{1}^{\mathrm{T}}(N)\overline{P}_{11}%
\overline{x}_{1}(N)+\sum\limits_{k=0}^{N-1}\left[ \overline{x}_{1}^{\mathrm{T%
}}(k)\overline{Q}_{11}\overline{x}_{1}(k)+u_{1}^{\mathrm{T}}(k)\rho
_{1}R_{1}u_{1}(k)\right] $, $\overline{\mathscr{V}}_{1}^{b}=\sum%
\limits_{k=0}^{N-1}\left[ 2\overline{x}_{1}^{\mathrm{T}}(k)\overline{Q}_{12}%
\overline{x}_{2}(k)\right] +2\overline{x}_{1}^{\mathrm{T}}(N)\overline{P}%
_{12}\overline{x}_{2}(N)$, $\overline{\mathscr{V}}_{1}^{c}=\sum%
\limits_{k=0}^{N-1}\left[ \overline{x}_{2}^{\mathrm{T}}(k)\overline{Q}_{22}%
\overline{x}_{2}(k)\right] +\overline{x}_{2}^{\mathrm{T}}(N)\overline{P}_{22}%
\overline{x}_{2}(N)$. From the structures of (\ref{reformu}) and $\overline{%
\mathscr{V}}_{1}$, we know that $\overline{\mathscr{V}}_{1}^{a}$ contains
these parts that are only influenced by $\mathbf{u}_{1};$ $\overline{%
\mathscr{V}}_{1}^{b}$ collects the coupling prediction dynamics between $%
\overline{x}_{1}$ and $\overline{x}_{2}$ embedded by $\overline{Q}_{12}$ and
$\overline{P}_{12}$; $\overline{\mathscr{V}}_{1}^{c}$ is only effected by $%
\mathbf{u}_{2}$ and may be neglected when computing $\mathbf{u}_{1}$. Thus,
it is $\overline{\mathscr{V}}_{1}^{b}$ that makes the computation of $%
\mathbf{u}_{1}$ dependent on $\mathbf{u}_{2}$. This clearly raises the
question {about} how this coupling term can be dealt with to facilitate the
computation of local inputs and reduce communication burden.

\section{\label{sgscopy2}Stability Analysis}

\subsection{The centralized case}

The centralized MPC problem for (\ref{system}) with the cost (\ref%
{originalcost}), after the transformation procedure in Section 3 is applied,
is equavilent to the centralized MPC problem for (\ref{reformu}) with the
cost%
\begin{equation}
\overline{\mathscr{V}}=\sum\limits_{k=0}^{N-1}[\overline{x}^{\mathrm{T}}(k)%
\overline{Q}\overline{x}(k)+u^{\mathrm{T}}(k)Ru(k)]+\overline{x}^{\mathrm{T}%
}(N)\overline{P}\overline{x}(N),  \label{cennewcost}
\end{equation}%
with $\overline{Q}$, $\overline{P}$, and $R$ being defined in (\ref{weights}%
) and (\ref{originalcost}), respectively. To guarantee stability, the
original terminal weigthing matrix $P_{i}$ has to be selected so that $%
\overline{P}$ satisfies certain conditions. Based on Assumption \textbf{A1},
one can find $K_{i}$ such that%
\begin{equation}
\overline{A}_{K_{i}}=\overline{A}_{i}+\widetilde{B}_{i}K_{i}
\label{identity}
\end{equation}%
is Schur stable. {Given that $\mathbf{U}_{i}$ contains the origin in its
interior, for }$i\in \mathscr{I}_{1:2}${, there exists a (possibly small)}
polyhedral positively invariant set $\overline{\mathscr{X}}_{T}^{i}$ around
the origin for $\overline{x}_{i}^{+}=\overline{A}_{K_{i}}\overline{x}_{i}$
so that $K_{i}\overline{x}_{i}\in \mathbf{U}_{i}$ for all $\overline{x}%
_{i}\in \overline{\mathscr{X}}_{T}^{i}.$ Denote $K=diag(K_{1},K_{2})$ and%
\begin{equation}
\overline{A}_{K}=\overline{A}+diag(\overline{B}_{1},\overline{B}_{2})K=diag(%
\overline{A}_{K_{1}},\overline{A}_{K_{2}}).  \label{KAK}
\end{equation}%
One then has that $\overline{A}_{K}$ is Schur stable. Moreover, $K\overline{x%
}$ becomes a local stablizing controller such that $K\overline{x}\in \mathbf{%
U}_{1}\times \mathbf{U}_{2}$, for $\overline{x}\in \overline{\mathscr{X}}%
_{T}^{1}\times \overline{\mathscr{X}}_{T}^{2}$. Consider the following
centralized MPC problem%
\begin{equation}
\left\{
\begin{array}{l}
\min\limits_{\mathbf{u}}\overline{\mathscr{V}}\text{ }s.t.\text{ (\ref%
{reformu}), }\overline{x}(N)\in \overline{\mathscr{X}}_{T}^{1}\times
\overline{\mathscr{X}}_{T}^{2} \\
u(k)\in \mathbf{U}_{1}\times \mathbf{U}_{2},\text{ }k\in \mathscr{I}_{0:N-1}%
\end{array}%
\right. ,  \label{centralized}
\end{equation}%
where, $u(k)=[u_{1}(k),u_{2}(k)]$ and $\mathbf{u}$ is the input sequences
for (\ref{reformu}) along the prediction horizon. Denote $\overline{%
\mathscr{X}}_{N}$ the set of all $\overline{x}$ for which there exists a
feasible $\mathbf{u}$ to (\ref{centralized}). At each sampling instant,
assume that $K\overline{x}$ is applied after the prediction horizon as a
local controller in the set $\overline{\mathscr{X}}_{T}^{1}\times \overline{%
\mathscr{X}}_{T}^{2}$. Based on the terminal triple argument \cite[Chp. 2]%
{Madison2009}, for system (\ref{reformu}) in closed-loop with the solution
to the centralized problem (\ref{centralized}), stability can be obtained,
if the original terminal weigthing $P_{i}$ is selected so that $\overline{P}$
is the unique positive definite solution to the ARE%
\begin{equation}
\overline{A}_{K}^{\mathrm{T}}\widehat{P}\overline{A}_{K}+\overline{Q}+K^{%
\mathrm{T}}RK=\widehat{P},  \label{ARECen}
\end{equation}%
i.e., $\overline{P}=\widehat{P}>0$. Note from (\ref{weights}), each of the
four partitions of $\overline{P}$ is block diagonal (BD), while $\widehat{P}$
as the solution to (\ref{ARECen}) does not necessarily have the same
structure with $\overline{P}$. Thus, the question reduces to whether $%
\widehat{P}$ has specified structures as $\overline{P}$. This issue is
closely related to the problem of structured Lyapunov functions and the
requirement of $\widehat{P}$ having the same structure as $\overline{P}$ in (%
\ref{weights}) would only be satisfied for systems with certain properties
\cite{Boyd1989}. Without making such an assumption, an alternative
sufficient condition for closed-loop stability is presented next. Denote%
\begin{equation}
\widehat{P}=\left[
\begin{array}{cc}
\widehat{P}_{11} & \widehat{P}_{12} \\
\widehat{P}_{12}^{\mathrm{T}} & \widehat{P}_{22}%
\end{array}%
\right] >0.  \label{Phat}
\end{equation}

\begin{proposition}
\label{prop}Select $K_{i}$ as in (\ref{identity}). Assume that $K\overline{x}
$ is applied after the prediction horizon as a local controller in the set $%
\overline{\mathscr{X}}_{T}^{1}\times \overline{\mathscr{X}}_{T}^{2}$. If the
terminal weigthing matrix $P_{i}$\ in (\ref{localcost}) have been selected
such that $\overline{P}$\ satisfies the matrix inequality%
\begin{equation}
\overline{A}_{K}^{\mathrm{T}}(\overline{P}-\widehat{P})\overline{A}_{K}\leq
\overline{P}-\widehat{P},  \label{requirement1}
\end{equation}%
then the origin of the closed-loop system of (\ref{reformu}) with the
solution to (\ref{centralized}), is exponentially stable in $\overline{%
\mathscr{X}}_{N}$.
\end{proposition}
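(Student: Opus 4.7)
The plan is to follow the standard terminal-triple Lyapunov argument for MPC stability (see \cite[Chp.~2]{Madison2009}), using the optimal value function $\overline{\mathscr{V}}^{0}(\overline{x})$ of problem (\ref{centralized}) as the candidate Lyapunov function on $\overline{\mathscr{X}}_{N}$. The only nonstandard step is verifying the terminal cost descent inequality when $\overline{P}$ itself does not solve the ARE (\ref{ARECen}); this is exactly where hypothesis (\ref{requirement1}) will be invoked.

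First I would carry out the usual feasibility bookkeeping. Because $\overline{\mathscr{X}}_{T}^{1}\times \overline{\mathscr{X}}_{T}^{2}$ is positively invariant under $\overline{x}^{+}=\overline{A}_{K}\overline{x}$ with $K\overline{x}\in \mathbf{U}_{1}\times \mathbf{U}_{2}$ on that set (both established just before the statement), the warm-start sequence at time $k+1$ obtained by shifting the optimal sequence at $k$ forward and appending $K\overline{x}(k+N)$ is feasible for (\ref{centralized}). Writing $\ell(\overline{x},u)=\overline{x}^{\mathrm{T}}\overline{Q}\overline{x}+u^{\mathrm{T}}Ru$, the standard telescoping comparison then yields $\overline{\mathscr{V}}^{0}(\overline{x}(k+1))-\overline{\mathscr{V}}^{0}(\overline{x}(k))\le -\ell(\overline{x}(k),u^{0}(k))+D(\overline{x}(k+N))$, where $D(\overline{x}):=\overline{x}^{\mathrm{T}}\bigl[\overline{A}_{K}^{\mathrm{T}}\overline{P}\,\overline{A}_{K}+\overline{Q}+K^{\mathrm{T}}RK-\overline{P}\bigr]\overline{x}$ is the terminal descent residual.

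The crux is to show $D(\overline{x})\le 0$ for every $\overline{x}\in \overline{\mathscr{X}}_{T}^{1}\times \overline{\mathscr{X}}_{T}^{2}$, and this is where the non-ARE structure of $\overline{P}$ has to be absorbed. I would split $\overline{P}=\widehat{P}+(\overline{P}-\widehat{P})$ and use the ARE (\ref{ARECen}) satisfied by $\widehat{P}$: a direct substitution shows that the bracketed matrix in $D(\overline{x})$ collapses exactly to $\overline{A}_{K}^{\mathrm{T}}(\overline{P}-\widehat{P})\overline{A}_{K}-(\overline{P}-\widehat{P})$, which is negative semidefinite by assumption (\ref{requirement1}). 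Combined with the previous display, this gives the clean one-step decrease $\overline{\mathscr{V}}^{0}(\overline{x}(k+1))-\overline{\mathscr{V}}^{0}(\overline{x}(k))\le -\overline{x}(k)^{\mathrm{T}}\overline{Q}\overline{x}(k)-u^{0}(k)^{\mathrm{T}}Ru^{0}(k)$.

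Finally I would close the argument with the standard exponential Lyapunov conclusion on $\overline{\mathscr{X}}_{N}$: Assumption \textbf{A3} (transported to the transformed coordinates via the orthogonal $T$) together with $R>0$ supplies a quadratic lower bound on $\overline{\mathscr{V}}^{0}$ through the usual IOSS/detectability argument along closed-loop trajectories, while continuity of $\overline{\mathscr{V}}^{0}$ near the origin and $\overline{P}>0$ furnish the matching quadratic upper bound. Together with the per-step decrease established above, these bounds qualify $\overline{\mathscr{V}}^{0}$ as an exponential Lyapunov function and deliver exponential stability of the origin in $\overline{\mathscr{X}}_{N}$. The single nontrivial step is the algebraic identity that rewrites the terminal descent residual as $\overline{A}_{K}^{\mathrm{T}}(\overline{P}-\widehat{P})\overline{A}_{K}-(\overline{P}-\widehat{P})$; once this is spotted, (\ref{requirement1}) does all the remaining work and the rest is textbook MPC.
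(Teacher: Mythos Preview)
Your proposal is correct and follows essentially the same approach as the paper: both reduce, via the standard terminal-triple argument from \cite[Chp.~2]{Madison2009}, to the terminal descent condition $\overline{A}_{K}^{\mathrm{T}}\overline{P}\overline{A}_{K}+\overline{Q}+K^{\mathrm{T}}RK\leq \overline{P}$, and then subtract the ARE (\ref{ARECen}) to obtain exactly (\ref{requirement1}). Your write-up is more explicit about the warm-start feasibility step and the detectability-based quadratic bounds, whereas the paper simply cites standard MPC stability results for those parts, but the substantive algebraic idea is identical.
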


\begin{proof}
Based on standard MPC stability arguments \cite[pp.142-145]{Madison2009}, if
$\overline{K}_{i}$ is selected as in (\ref{identity}), and $K\overline{x}$
is applied after the prediction horizon as a local controller in the set $%
\overline{\mathscr{X}}_{T}^{1}\times \overline{\mathscr{X}}_{T}^{2}$, to
establish stability, we only have to show that $\vartheta _{t}(\overline{A}%
_{K}\overline{x})+\ell (\overline{x},u)-\vartheta _{t}(\overline{x})\leq 0,$
$\forall \overline{x}_{1}\in \overline{\mathscr{X}}_{T}^{1},$ where $\ell (%
\overline{x},u)=\overline{x}^{\mathrm{T}}(k)\overline{Q}\overline{x}(k)+u^{%
\mathrm{T}}(k)Ru(k),$ $\vartheta _{t}(\overline{x})=\overline{x}^{\mathrm{T}%
}(N)\overline{P}\overline{x}(N)$. The above inequality holds if $\overline{A}%
_{K}^{\mathrm{T}}\overline{P}\overline{A}_{K}+\overline{Q}+K^{\mathrm{T}%
}RK\leq \overline{P}.$ By deducting (\ref{ARECen}) from the above inequality
on both sides, we obtain (\ref{requirement1}).
\end{proof}

\subsection{The distributed case}

We proceed to establish the stability of the closed-loop system with
distributed solutions. As in the previous subsection, for $i\in \mathscr{I}%
_{1:2}$, we assume that $K_{i}$ is selected with a {polyhedral} positively
invariant set $\overline{\mathscr{X}}_{T}^{i}$ around the origin for the
system $\overline{x}_{i}^{+}=\overline{A}_{K_{i}}\overline{x}_{i}$ such that
$\overline{A}_{K_{i}}$ (\ref{identity}) is Schur stable and $K_{i}\overline{x%
}_{i}\in \mathbf{U}_{i}$ for all $\overline{x}_{i}\in \overline{\mathscr{X}}%
_{T}^{i}$. Note that, $Q_{i}$ and $P_{i}$ are generally not block diagonal
(BD), i.e., $Q_{i,\ast },P_{i,\ast }\neq 0$ in (\ref{saff}). However, from
Remark~\ref{rmt1}, we still have $\overline{Q}_{ij}\geq 0,\overline{P}%
_{ij}>0,$ when $i=j,$ for $i,j\in \mathscr{I}_{1:2},$ with $\overline{Q}%
_{ij},\overline{P}_{ij}$ defined in (\ref{weights}). A closer look at the
optimization problem (\ref{finalcost}) for computing $\mathbf{u}_{1}$
reveals that as long as $\overline{x}_{1}$ (defined in (\ref{nominal})) can
be stabilized\textit{\ }by $u_{1}$ (similarly, $\overline{x}_{2}$ stabilized%
\textit{\ }by $u_{2}$), the coupling term $\overline{\mathscr{V}}_{1}^{b}$
becomes an asymptotically vanishing weight that can be overlooked in the
local optimization. Under such conditions, to compute $\mathbf{u}_{1}$, one
only needs the information that is associated with $\overline{x}_{1}$, and
there is no need of the information of $\overline{x}_{2}$ and $\mathbf{u}%
_{2} $, e.g., there is no need to perform iterations in the local
optimizations. In such cases, the resultant input sequence $\mathbf{u}_{1}$
(and $\mathbf{u}_{2}$) is a suboptimal solution to the original cooperative
problem with cost function (\ref{originalcost}). In return, benefits of
doing so are the large amount of communication burden reduction and the
independency in computing the local inputs. In this case, the local
optimization problems become%
\begin{equation}
\left\{
\begin{array}{l}
\min\limits_{\mathbf{u}_{i}}\overline{\mathscr{V}}_{i}^{a}(\overline{x}%
_{i}(0),\mathbf{u}_{i}) \\
s.t.\text{ (\ref{nominal}), }\overline{x}_{i}(N)\in \overline{\mathscr{X}}%
_{T}^{i}\text{; }u_{i}(k)\in \mathbf{U}_{i},\text{ }k\in \mathscr{I}_{0:N-1}%
\end{array}%
\right.  \label{local}
\end{equation}%
where, $\overline{\mathscr{V}}_{i}^{a}(\overline{x}_{i}(0),\mathbf{u}_{i})$
is in the same form as $\overline{\mathscr{V}}_{1}^{a}$ in (\ref{finalcost}%
), with $\overline{P}_{ii}$ being defined in (\ref{weights}). Denote $%
\overline{\mathscr{X}}_{N}^{i}$ the set of all $\overline{x}_{i}$ for which
there exists a feasible $\mathbf{u}_{i}$ to (\ref{local}), for $i\in %
\mathscr{I}_{1:2}$. At each sampling instant, assume that $K_{i}\overline{x}%
_{i}$ is applied after the prediction horizon.

\begin{remark}
\label{rmx2}Once we select the local costs as in subsection 2.1 (see in text
before and after (\ref{originalcost})) and follow the procedure in Section 3
to apply (\ref{transformation}), the structures of the weightings in $%
\overline{\mathscr{V}}_{i}^{a}$ of (\ref{local}) will be fixed. Especially,
the terminal weighting $\overline{P}_{ii}$ is with a BD structure as
specified in (\ref{weights}).
\end{remark}

Therefore, the question reduces to whether the solution to the optimization
problem (\ref{local}), whose terminal costs have specified BD structures,
can stabilize $\overline{x}_{i}$. In other words, if the ARE%
\begin{equation}
\overline{A}_{K_{i}}^{\mathrm{T}}\widehat{P}_{i}\overline{A}_{K_{i}}+%
\overline{Q}_{ii}+K_{i}^{\mathrm{T}}\rho _{i}R_{i}K_{i}=\widehat{P}_{i},
\label{ARED}
\end{equation}%
admits a BD positive definite solution $\widehat{P}_{i}$ having the same
structure as $\overline{P}_{ii}$ in (\ref{weights}), closed-loop stability
would follow based on the terminal triple argument \cite[Chp. 2]{Madison2009}%
. However, as noted in Remark \ref{rmx2}, $\overline{P}_{ii}$ is with a BD
structure. This problem is closely related to the question of diagonal
stability and the requirement of $\widehat{P}_{i}$ having the same BD
structure as $\overline{P}_{ii}$ in (\ref{weights}) would only be satisfied
for systems with certain properties \cite{Bhaya2000}-\cite{Anderson2016}.
Alternatively, based on Proposition \ref{prop}, we have the following result.

\begin{proposition}
\label{prop2}Select $K_{i}$ as in (\ref{identity}). At each sampling
instant, assume that $K_{i}\overline{x}_{i}$ is applied after the prediction
horizon. If $P_{i}$ \ in (\ref{localcost}) have been selected such that (\ref%
{requirement1}) holds, then $\overline{P}_{ii}$ in (\ref{weights}) {satisfies%
} the matrix inequality%
\begin{equation}
\overline{A}_{K_{i}}^{\mathrm{T}}(\overline{P}_{ii}-\widehat{P}_{ii})%
\overline{A}_{K_{i}}\leq \overline{P}_{ii}-\widehat{P}_{ii};
\label{selection2}
\end{equation}%
the origin of the closed-loop system of (\ref{reformu}) with the solution to
the problem (\ref{local}) is exponentially stable in $\overline{\mathscr{X}}%
_{N}^{1}\times \overline{\mathscr{X}}_{N}^{2},$ with $\mathbf{u}_{1}$ and $%
\mathbf{u}_{2}$ being computed independently.
\end{proposition}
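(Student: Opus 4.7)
The plan is to prove Proposition \ref{prop2} in three stages: first derive the block-diagonal inequality (\ref{selection2}) from the full inequality (\ref{requirement1}) via a principal-block extraction; then combine it with the $(i,i)$ block of the centralized ARE (\ref{ARECen}) to obtain a local Lyapunov descent inequality for $\overline{P}_{ii}$; and finally invoke the terminal-triple stability argument used in the proof of Proposition \ref{prop} on each decoupled subsystem separately.

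For the first step, I would use that both $K = diag(K_1,K_2)$ and, by (\ref{KAK}), $\overline{A}_K = diag(\overline{A}_{K_1},\overline{A}_{K_2})$ are block diagonal. Let $S_1 = [I_{\overline{n}_1}\ 0]$ be the block-row selector that extracts the $(1,1)$ principal block of a conformably partitioned matrix; since $\overline{A}_K$ is block diagonal, $S_1 \overline{A}_K = \overline{A}_{K_1} S_1$, hence for any symmetric $M$
\begin{equation*}
S_1 \overline{A}_K^{\mathrm{T}} M \overline{A}_K S_1^{\mathrm{T}} = \overline{A}_{K_1}^{\mathrm{T}} (S_1 M S_1^{\mathrm{T}}) \overline{A}_{K_1}.
\end{equation*}
Pre- and post-multiplying (\ref{requirement1}) by $S_1$ and $S_1^{\mathrm{T}}$ with $M = \overline{P} - \widehat{P}$ immediately yields (\ref{selection2}) for $i=1$; the analogous computation with the selector for the second block handles $i=2$.

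For the second step, the same block-extraction applied to the centralized ARE (\ref{ARECen}) reduces its $(i,i)$ block to
\begin{equation*}
\overline{A}_{K_i}^{\mathrm{T}} \widehat{P}_{ii} \overline{A}_{K_i} + \overline{Q}_{ii} + K_i^{\mathrm{T}} \rho_i R_i K_i = \widehat{P}_{ii},
\end{equation*}
because $\overline{A}_K$, $K$, $R = diag(\rho_1 R_1, \rho_2 R_2)$, and the diagonal blocks $\overline{Q}_{ii}$ of $\overline{Q}$ are all compatible with this partition. Adding this identity to (\ref{selection2}) cancels $\widehat{P}_{ii}$ and gives
\begin{equation*}
\overline{A}_{K_i}^{\mathrm{T}} \overline{P}_{ii} \overline{A}_{K_i} + \overline{Q}_{ii} + K_i^{\mathrm{T}} \rho_i R_i K_i \leq \overline{P}_{ii},
\end{equation*}
which is exactly the Lyapunov descent needed to make $\overline{P}_{ii}$ a valid terminal cost for the local MPC problem (\ref{local}).

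Finally, I would apply the terminal-triple argument from the proof of Proposition \ref{prop} to each problem (\ref{local}) independently. By (\ref{nominal}) the transformed dynamics split into two systems driven only by $u_1$ and $u_2$ respectively, and every constraint in (\ref{local}) is local, so the two optimizations are completely decoupled and can be carried out in parallel without inter-subsystem communication. For each $i$, $\overline{\mathscr{X}}_T^i$ is positively invariant under $\overline{A}_{K_i}$ with $K_i \overline{x}_i \in \mathbf{U}_i$ on this set, and the descent inequality above ensures that the usual shift-and-append candidate (appending $K_i \overline{x}_i$ at the tail) is feasible and reduces cost by at least the current stage cost; hence the optimal value function of (\ref{local}) is a Lyapunov function on $\overline{\mathscr{X}}_N^i$, giving exponential stability of the origin of that subsystem. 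Exponential stability of the origin of the full transformed system (\ref{reformu}) on the product set $\overline{\mathscr{X}}_N^1 \times \overline{\mathscr{X}}_N^2$ then follows at once. The only place requiring care is the selector manipulation in the first step; once one notes the block-diagonality of $\overline{A}_K$, this is a routine algebraic identity, so I do not anticipate a genuine obstacle in this proof.
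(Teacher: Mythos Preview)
Your proposal is correct and follows essentially the same approach as the paper's own proof: extract the $(i,i)$ principal block from both the centralized ARE (\ref{ARECen}) and the inequality (\ref{requirement1}) using the block-diagonality of $\overline{A}_K$, $K$ and $R$, add the two to obtain the local Lyapunov descent inequality $\overline{A}_{K_i}^{\mathrm{T}}\overline{P}_{ii}\overline{A}_{K_i}+\overline{Q}_{ii}+K_i^{\mathrm{T}}\rho_iR_iK_i\leq\overline{P}_{ii}$, and then invoke the standard terminal-triple MPC stability argument on each decoupled subsystem (\ref{nominal}). The only cosmetic difference is that you phrase the block extraction via the selector $S_i$ and the commutation $S_i\overline{A}_K=\overline{A}_{K_i}S_i$, whereas the paper writes out the $2\times2$ block matrices explicitly; the content is the same.
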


\begin{proof}
Given the structure of the reformulated system (\ref{reformu}), closed-loop
stability can be proved if we show that $u_{i}$ stabilizes $\overline{x}_{i}$%
. To illustrate, we show that if the original terminal weigthing matrix $%
P_{i}$ is selected such that (\ref{requirement1}) holds, then $u_{1}$
stabilizes $\overline{x}_{1}$. Note that if we ignore the terms $\overline{%
\mathscr{V}}_{1}^{b}$ and $\overline{\mathscr{V}}_{1}^{c}$ as in the
optimization problem (\ref{local}), the cost function becomes $\overline{%
\mathscr{V}}_{1}^{a}=\sum\limits_{k=0}^{N-1}\ell (\overline{x}%
_{1}(k),u_{1}(k))+\vartheta _{t}(\overline{x}_{1}(N)),$ with $\ell (%
\overline{x}_{1},u_{1})=\overline{x}_{1}^{\mathrm{T}}\overline{Q}_{11}%
\overline{x}_{1}+u_{1}^{\mathrm{T}}\rho _{1}R_{1}u_{1}$, and $\vartheta
_{t1}(\overline{x}_{1})=\overline{x}_{1}^{\mathrm{T}}\overline{P}_{11}%
\overline{x}_{1}.$ Since the local terminal controller $K_{1}\overline{x}_{1}
$ is applied beyond the prediction horizon, following standard MPC stability
arguments \cite[pp.142-145]{Madison2009}, we only have to show that $%
\vartheta _{t1}(\overline{x}_{1})$ satisfies $\vartheta _{t1}(\overline{A}%
_{K_{1}}\overline{x}_{1})+\ell (\overline{x}_{1},u_{1})-\vartheta _{t1}(%
\overline{x}_{1})\leq 0,$ $\forall \overline{x}_{1}\in \overline{\mathscr{X}}%
_{T}^{1}.$ This inequality holds if%
\begin{equation}
\overline{A}_{K_{1}}^{\mathrm{T}}\overline{P}_{11}\overline{A}_{K_{1}}+%
\overline{Q}_{11}+K_{1}^{\mathrm{T}}\rho _{1}R_{1}K_{1}\leq \overline{P}%
_{11}.  \label{relat}
\end{equation}%
Given $K$, $\overline{A}_{K}$ in (\ref{KAK}), $R$ in (\ref{originalcost}), $%
\overline{Q}$ (\ref{weights}), and the structure of $\widehat{P}$ in (\ref%
{Phat}), the ARE (\ref{ARECen}) can be rewritten as%
\begin{equation*}
\begin{array}{l}
\left[
\begin{array}{cc}
\overline{A}_{K_{1}} &  \\
& \overline{A}_{K_{2}}%
\end{array}%
\right] ^{\mathrm{T}}\left[
\begin{array}{cc}
\widehat{P}_{11} & \widehat{P}_{12} \\
\widehat{P}_{12}^{\mathrm{T}} & \widehat{P}_{22}%
\end{array}%
\right] \left[
\begin{array}{cc}
\overline{A}_{K_{1}} &  \\
& \overline{A}_{K_{2}}%
\end{array}%
\right]  \\
+\left[
\begin{array}{cc}
\overline{Q}_{11} & \overline{Q}_{12} \\
\overline{Q}_{12}^{\mathrm{T}} & \overline{Q}_{22}%
\end{array}%
\right] +\text{ }\left[
\begin{array}{cc}
K_{1} &  \\
& K_{2}%
\end{array}%
\right] ^{\mathrm{T}}\left[
\begin{array}{cc}
\rho _{1}R_{1} &  \\
& \rho _{2}R_{2}%
\end{array}%
\right]  \\
\left[
\begin{array}{cc}
K_{1} &  \\
& K_{2}%
\end{array}%
\right] =\left[
\begin{array}{cc}
\widehat{P}_{11} & \widehat{P}_{12} \\
\widehat{P}_{12}^{\mathrm{T}} & \widehat{P}_{22}%
\end{array}%
\right] .%
\end{array}%
\end{equation*}%
Taking out the the first block diagonal components in the above expression
gives us
\begin{equation}
\overline{A}_{K_{1}}^{\mathrm{T}}\widehat{P}_{11}\overline{A}_{K_{1}}+%
\overline{Q}_{11}+K_{1}^{\mathrm{T}}\rho _{1}R_{1}K_{1}=\widehat{P}_{11}.
\label{ine1}
\end{equation}%
We can also obtain a more structured expression of (\ref{requirement1}):%
\begin{equation*}
\begin{array}{l}
\left[
\begin{array}{cc}
\overline{A}_{K_{1}} &  \\
& \overline{A}_{K_{2}}%
\end{array}%
\right] ^{\mathrm{T}}\left( \left[
\begin{array}{cc}
\overline{P}_{11} & \overline{P}_{12} \\
\overline{P}_{12}^{\mathrm{T}} & \overline{P}_{22}%
\end{array}%
\right] -\left[
\begin{array}{cc}
\widehat{P}_{11} & \widehat{P}_{12} \\
\widehat{P}_{12}^{\mathrm{T}} & \widehat{P}_{22}%
\end{array}%
\right] \right)  \\
\left[
\begin{array}{cc}
\overline{A}_{K_{1}} &  \\
& \overline{A}_{K_{2}}%
\end{array}%
\right] \leq \left[
\begin{array}{cc}
\overline{P}_{11} & \overline{P}_{12} \\
\overline{P}_{12}^{\mathrm{T}} & \overline{P}_{22}%
\end{array}%
\right] -\left[
\begin{array}{cc}
\widehat{P}_{11} & \widehat{P}_{12} \\
\widehat{P}_{12}^{\mathrm{T}} & \widehat{P}_{22}%
\end{array}%
\right] .%
\end{array}%
\end{equation*}%
If the original terminal {weighting matrices} $P_{i}$ is selected such that (%
\ref{requirement1}) holds, the diagonal components of the the above
inequality must hold, i.e., $\overline{A}_{K_{1}}^{\mathrm{T}}(\overline{P}%
_{11}-\widehat{P}_{11})\overline{A}_{K_{1}}\leq \overline{P}_{11}-\widehat{P}%
_{11}$. Adding the above inequality and (\ref{ine1}) from both sides gives (%
\ref{relat}). Therefore, $u_{1}$ stabilizes $\overline{x}_{1}$. A similar
argument can be made so that $u_{2}$ stabilizes $\overline{x}_{2}$ with $%
\overline{\mathscr{V}}_{2}^{b}$ and $\overline{\mathscr{V}}_{2}^{c}$ being
ignored in the local optimization. The independency between the computation
of $\mathbf{u}_{1}$ and $\mathbf{u}_{2}$ follows from the structure of the
cost in (\ref{local}).
\end{proof}

To apply the proposed framework with stability guarantee, condition (\ref%
{requirement1}) has to be enforced as an additional requirement when
selecting $P_{i}$\ in (\ref{localcost}), apart from the requirement that%
\textbf{\ }$P_{i}>0$\textbf{.} Also, the satisfaction of condition (\ref%
{requirement1}) is a joint property of the considered system and the
weightings $Q_{i}$ and $P_{i}$. Overall, Propositions \ref{prop}-\ref{prop2}
reveal that, within the proposed framework, local inputs can be computed
independently while still guaranteeing plant-wide closed-loop stability, if
condition (\ref{requirement1}) holds. Given the independency in computing
the local inputs, there is a possibility to generalize the centralized
tuning paradigms in \cite{KongIJC}-\cite{Kong2013} to the distributed case.

\begin{remark}
As remarked earlier, $Q_{i}$ and $P_{i}$ in (\ref{originalcost}) are
generally not BD. Given the arguments in the above stability analysis, one
would be tempted to choose $Q_{i}$ and/or $P_{i}$ to be BD, as long as they
are compatible with the dimension of the state elements in (\ref{sub}). It
is worthwhile emphazing that doing so is not sensible for the considered
problem setup. The main reason for this is if both $Q_{i}$ and $P_{i}$ are
BD, by the BD structure of both the system (\ref{system}) and the global
cost (\ref{originalcost}), there will be no coupling terms at all in the
local optimization (\ref{localcom}), i.e., the distributed solution will
always be the centralized solution, and there remains little motivation to
consider such a scenario. Thus, to avoid this situation, for $i\in %
\mathscr{I}_{1:2}$, either $Q_{i}$ or $P_{i}$ has to be not BD.
\end{remark}

\subsection{Further remarks}

Although the distributed policies without iterations resulting from (\ref%
{local}) are a suboptimal solution to the original cooperative problem with
cost function (\ref{originalcost}), they are actually optimal solutions to
the following centralized problem%
\begin{equation}
\left\{
\begin{array}{l}
\min\limits_{\mathbf{u}_{1},\mathbf{u}_{2}}\overline{\mathscr{V}}^{a}\text{ }%
s.t.\text{ (\ref{nominal}), }\overline{x}(N)\in \overline{\mathscr{X}}%
_{T}^{1}\times \overline{\mathscr{X}}_{T}^{2} \\
u_{1}(k)\in \mathbf{U}_{1},u_{2}(k)\in \mathbf{U}_{2},\text{ for }k\in %
\mathscr{I}_{0:N-1},\text{ }%
\end{array}%
\right.  \label{modcent}
\end{equation}%
where, $\overline{\mathscr{V}}^{a}=\overline{\mathscr{V}}_{1}^{a}+\overline{%
\mathscr{V}}_{2}^{a}=\sum\limits_{k=0}^{N-1}\left[ \overline{x}^{\mathrm{T}%
}(k)\overline{Q}_{a}\overline{x}(k)+u^{\mathrm{T}}(k)Ru(k)\right] +\overline{%
x}^{\mathrm{T}}(N)\overline{P}_{a}\overline{x}(N)$, with $\overline{%
\mathscr{V}}_{1}^{a}$ being defined in (\ref{finalcost}) and $\overline{%
\mathscr{V}}_{2}^{a}$ defined similarly, and $\overline{Q}_{a}=diag(%
\overline{Q}_{11},\overline{Q}_{22}),$ $\overline{P}_{a}=diag(\overline{P}%
_{11},\overline{P}_{22})$. Note that both the system dynamics and the cost
functions in the above optimization problem are of a {BD} structure. Thus
this optimization problem becomes separable for the local subsystems,
rendering its solution same with that of (\ref{local}). The only difference
between the above optimization problem and the original cooperative question
with cost function (\ref{originalcost}) is in the state weightings, i.e.,
the BD entries of state weightings in (\ref{originalcost}) (see the
relationships in (\ref{QPnew})) are taken out and placed in $\overline{%
\mathscr{V}}^{a}$ in (\ref{modcent}). Denote $\widetilde{Q}_{a}=\overline{Q}-%
\overline{Q}_{a},$ $\widetilde{P}_{a}=\overline{P}-\overline{P}_{a}.$
Obviously, $\widetilde{Q}_{a}$ and $\widetilde{P}_{a}$ become symmetric
matrices with zero diagonal entries. Matrices with zero diagonal entries are
also called hollow matrices. Generally speaking, $\widetilde{Q}_{a}$ and $%
\widetilde{P}_{a}$ are indefinite, i.e., their eigenvalues might be
positive, negative and zero. Therefore, the value of the coupled part in the
global cost ($\overline{\mathscr{V}}_{1}^{b}$ in (\ref{finalcost})) may be
positive or negative. We will illustrate this point in Section \ref{example}.

It should also be noted that the proposed D\&C\ framework can be realized
with iterations as in standard cooperative frameworks. If we adopt some warm
start techniques and perform iterations in the local optimization as in
standard cooperative frameworks, closed-loop stability of the D\&C framework
with iterations can be obtained if the terminal weighting $P_{i}$\ in (\ref%
{localcost}) have been selected such that (\ref{requirement1}) holds (this
can be proved by following arguments in Proposition \ref{prop} and \cite[%
Theorem 9]{Rawlings2010}, and we do not present further details here). An
important observation to be made here is that closed-loop stability for the
system (\ref{reformu}) with the centralized solution, the distributed
solution without or with iterations, respectively, can be guaranteed
simultaneously by selecting the terminal weigthing matrix $P_{i}$\ in (\ref%
{localcost}) so that (\ref{requirement1}) holds, i.e., one can change the
implementation method during operation without affecting the closed-loop
stability.

We next present a sufficient condition for (\ref{requirement1}) to hold. A
matrix $A\in \mathbf{R}^{n\times n}\ $is diagonally dominant (DD) if $%
\left\vert a_{ii}\right\vert \geq \sum\limits_{j\neq i}\left\vert
a_{ij}\right\vert $, for $i=1,...,n$ \cite[pp. 392]{Horn2013}.

\begin{corollary}
\label{cor1first} Denote $\widehat{P}$ as the solution to the ARE (\textit{%
\ref{ARECen}). }If the terminal weigthing matrix $P_{i}$\ in (\ref{localcost}%
) have been selected such that $\overline{P}-\widehat{P}-\overline{A}_{K}^{%
\mathrm{T}}(\overline{P}-\widehat{P})\overline{A}_{K}$\ is DD with
nonnegative diagonal entries, then condition (\ref{requirement1})\ holds.
\end{corollary}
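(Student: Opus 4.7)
The plan is to reduce the corollary to a well-known linear algebra fact about diagonally dominant symmetric matrices. Condition (\ref{requirement1}) is equivalent to asserting that the symmetric matrix
\[
M \;:=\; \overline{P}-\widehat{P}-\overline{A}_{K}^{\mathrm{T}}(\overline{P}-\widehat{P})\overline{A}_{K}
\]
is positive semidefinite. Symmetry of $M$ is immediate because both $\overline{P}$ (by construction in (\ref{weights})) and $\widehat{P}$ (as the unique solution to the ARE (\ref{ARECen})) are symmetric, and conjugation by $\overline{A}_{K}$ preserves symmetry.

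Given symmetry, I would then invoke the classical result that any real symmetric matrix which is diagonally dominant with nonnegative diagonal entries is positive semidefinite. The standard justification is Gershgorin's disc theorem: every eigenvalue of $M$ lies in the union of discs centered at the diagonal entries $m_{ii}\ge 0$ with radii $\sum_{j\neq i}|m_{ij}|\le m_{ii}$, so each disc is contained in $\{z:\operatorname{Re} z \ge 0\}$. Because $M$ is symmetric, its eigenvalues are real, and hence nonnegative, which yields $M\succeq 0$.

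Putting these two observations together, the hypothesis of the corollary is precisely that $M$ satisfies the premise of this classical fact, so $M\succeq 0$, which is (\ref{requirement1}). There is essentially no obstacle here beyond recognizing that (\ref{requirement1}) is the positive semidefiniteness of $M$ and citing the Gershgorin-type argument; the only thing to be slightly careful about is to note explicitly that $M$ is symmetric so that the Gershgorin bound on eigenvalues (which in general bounds the spectrum in the complex plane) actually forces nonnegativity on the real line. No further structural properties of $\overline{A}_{K}$, $\overline{P}$, or $\widehat{P}$ are needed in the argument, which is what makes this a clean sufficient condition for verifying (\ref{requirement1}) without having to compute the ARE solution $\widehat{P}$ beyond what is already required.
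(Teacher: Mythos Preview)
Your proposal is correct and follows essentially the same route as the paper: both reduce (\ref{requirement1}) to the positive semidefiniteness of $M=\overline{P}-\widehat{P}-\overline{A}_{K}^{\mathrm{T}}(\overline{P}-\widehat{P})\overline{A}_{K}$ and then invoke the classical fact that a real symmetric, diagonally dominant matrix with nonnegative diagonal entries is positive semidefinite. The paper simply cites this fact from \cite{Berman2003}, whereas you additionally justify it via Gershgorin and make the symmetry of $M$ explicit, but the argument is the same.
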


\begin{proof}
From \cite[pp. 15]{Berman2003}, a DD and symmetric matrix with nonnegative
diagonal entries is positive semidefinite. Given $\widehat{P}$ as the
solution to the ARE (\ref{ARECen}\textit{)}, if the original terminal
weigthing matrix $P_{i}$ in (\ref{localcost}) is selected so that $\overline{%
P}-\widehat{P}-\overline{A}_{K}^{\mathrm{T}}(\overline{P}-\widehat{P})%
\overline{A}_{K}$ is DD with nonnegative diagonal entries, the former matrix
is positive semidefinite. Therefore, condition (\ref{requirement1}) holds.
\end{proof}

\section{\label{sgscopy3}Some Generalizations and Discussions}

With $M$ subsystems, the plant-wide system becomes:%
\begin{equation}
x^{+}=Ax+\sum\limits_{i\in \mathscr{I}_{1:M}}B_{i}u_{i},
\label{overallplant}
\end{equation}%
where, $x=[x_{1},\cdots ,x_{M}],$ $A=diag(A_{1},\cdots ,A_{M}),$ $B_{i}=[%
\overline{B}_{1i},\cdots ,\overline{B}_{Mi}],$ for $i\in \mathscr{I}_{1:M},$
with $A_{i}$ and $B_{i}$ in similar structures with $A_{1}$ and $B_{1}$ in (%
\ref{localsystem}) and (\ref{system}), respectively. Denote $\mathbf{u=[%
\mathbf{u}_{1},\cdots ,\mathbf{u}_{M}]}$. The global cost is $\mathscr{V}%
(x(0),\mathbf{u})=\sum\limits_{i\in \mathscr{I}_{1:M}}\rho
_{i}V_{i}(x_{i}(0),\mathbf{u}_{i})$, where $V_{i}(x_{i}(0),\mathbf{u}_{i})$
takes the form of $V_{i}$ in subsection 2.1. To compute $\mathbf{u}_{i}$,
each local system solves
\begin{equation}
\min\limits_{\mathbf{u}_{i}}\mathscr{V}_{i}(x(0),\mathbf{u})\text{ }s.t.%
\text{ (\ref{overallplant}), }u_{i}(k)\in \mathbf{U}_{i},\text{ for }k\in %
\mathscr{I}_{0:N-1},  \label{withoutiterations}
\end{equation}%
where $\mathscr{V}_{i}(x(0),\mathbf{u})$ is with a similar structure to that
of $\mathscr{V}_{i}$ in (\ref{localcom}). We propose the transformation $%
x=T_{M}\overline{x}$, where%
\begin{equation}
T_{M}=\left[
\begin{array}{cccc}
T_{_{1,1}} & T_{_{1,2}} & \cdots & T_{_{1,M}} \\
T_{_{2,1}} & T_{_{2,2}} & \cdots & T_{_{2,M}} \\
\vdots & \ddots & \ddots & \vdots \\
T_{_{M,1}} & \cdots & T_{_{M,M-1}} & T_{_{M,M}}%
\end{array}%
\right] \in \mathbf{R}^{\widehat{n}\times \widehat{n}},  \label{TM}
\end{equation}%
and, $\widehat{n}=\sum\limits_{i\in \mathscr{I}_{1:M}}\sum\limits_{j\in %
\mathscr{I}_{1:M}}n_{ij};$ $T_{_{i,i}}\in \mathbf{R}^{\widetilde{n}%
_{i}\times \overrightarrow{n}_{i}},$ with $\widetilde{n}_{i}=\sum\limits_{j%
\in \mathscr{I}_{1:M}}n_{ij}\ $and $\overrightarrow{n}_{i}=\sum\limits_{j\in %
\mathscr{I}_{1:M}}n_{ji},$ is a BD matrix with its $i$-th block entry as an
identity matrix of dimension $n_{ii}$ and other entries as zeros, for $i\in %
\mathscr{I}_{1:M};$ for $i,j\in \mathscr{I}_{1:M}$ and $i<j$, $T_{_{i,j}}$
is a block partitioned matrix with its $(j,i)$ entry as an identity matrix
of dimension $n_{ij}$ and other block entries as zeros; for $i,j\in %
\mathscr{I}_{1:M}$ and $i>j$, $T_{_{i,j}}$ is a block partitioned matrix
with its $(j,i)$ entry as an identity matrix of dimension $n_{ij}$ and other
block entries as zeros. Moreover, it can be verified that $T_{M}^{\mathrm{T}%
}T_{M}=I_{\widehat{n}}$, i.e., $T_{M}^{-1}=T_{M}^{\mathrm{T}}.$ The impact
of the transformation $T_{M}$ on the system (\ref{overallplant}) and the
cost functions (\ref{withoutiterations}) can be conducted as in Section \ref%
{sgscopy}. Stability analysis can also be established as in Section \ref%
{sgscopy2}.

Moreover, the results in Sections \ref{sgscopy}-\ref{sgscopy2} for the case
with state feedback can be extended to the case with output feedback by {%
embellishing} the ideas in Sections \ref{sgscopy}-\ref{sgscopy2} with
decentralized estimation (based on Assumption A3). Then the problem at hand
becomes robust stability analysis in the presence of bounded (decaying)
disturbances, and can be dealt with by applying robust MPC techniques such
as tube-based MPC {or by exploiting the inherent robustness properties of
suboptimal MPC \cite{Pannocchia2011}.} We do not present detailed
discussions here due to limited space.

A summary of the comparisons between the proposed method and the technique
in \cite{Rawlings2010} is presented in the Table~\ref{tb1:comparisonrawlings}%
. It can be concluded that the proposed method, {without} iterations,
requires lower information and communication overhead than the technique in
\cite{Rawlings2010}, {while still enjoying the advantages of the stability
properties, as well as convergence to the global centralized performance,
when implemented with iterations.}
\begin{table}[t]
\caption{Comparisons between the two frameworks}
\label{tb1:comparisonrawlings}\renewcommand{\arraystretch}{1.3} \centering
\par
\begin{tabular}{ccc}
\hline
Method & Information need & Realization \\ \hline
\cite{Rawlings2010} & plant-wide state and input & with iterations \\
D\&C & partial plant-wide state & without or \\
& (without iterations); same as & with iterations \\
& \cite{Rawlings2010} (with iterations) &
\end{tabular}%
\end{table}
It will be hard to make a comprehensive comparison between the proposed
method and many other existing techniques. Despite the potential advantages
of the proposed method, it has been noted already that the adopted model
might be non-minimal. We also remark that the assessment of a specific
method should not only be based on the method itself but also the particular
targeted application. Moreover, only the case of decoupled inputs is
considered in this paper. Whether and how the proposed framework can deal
with coupled input constraints remains as an interesting future topic.
Another topic of interest is to extend the method to linear programming
(LP), for which some preliminary results have been presented in \cite%
{KongIFAC}.

\section{\label{example}Illustrative Examples}

\subsection{An academic example with three subsystems}

The first example is presented to show the controller design procedure of
the proposed framework. The considered system is composed of three
subsystems. The local systems in (\ref{overallplant}) are defined by $%
A_{i}=diag(A_{i1},A_{i2},A_{i3}),$ $\overline{B}_{i1}=[B_{i1},0,0]$, $%
\overline{B}_{i2}=[0,B_{i2},0]$, $\overline{B}_{i3}=[0,0,B_{i3}]$, in which%
\begin{equation*}
\begin{array}{l}
A_{11}=A_{12}=A_{13}=\left[
\begin{array}{cc}
0.6 & 0.5 \\
0.1 & 0.4%
\end{array}%
\right] ,\text{ }B_{11}=B_{12}=\left[
\begin{array}{c}
1 \\
0%
\end{array}%
\right] , \\
B_{13}=\left[
\begin{array}{c}
1 \\
-0.5%
\end{array}%
\right] ,\text{ }A_{21}=A_{22}=A_{23}=\left[
\begin{array}{cc}
0.2 & 0.1 \\
0.1 & 1%
\end{array}%
\right] , \\
B_{21}=\left[
\begin{array}{c}
0.5 \\
1%
\end{array}%
\right] ,\text{ }B_{22}=\left[
\begin{array}{c}
0.6 \\
0.8%
\end{array}%
\right] ,\text{ }B_{23}=\left[
\begin{array}{c}
0.6 \\
0.9%
\end{array}%
\right] , \\
A_{31}=A_{32}=A_{33}=\left[
\begin{array}{cc}
-0.1 & 0.6 \\
1 & -0.2%
\end{array}%
\right] ,\text{ }B_{31}=B_{32}=\left[
\begin{array}{c}
0 \\
1%
\end{array}%
\right] , \\
B_{32}=\left[
\begin{array}{c}
0 \\
1%
\end{array}%
\right] ,B_{33}=\left[
\begin{array}{c}
1 \\
0.8%
\end{array}%
\right] .%
\end{array}%
\end{equation*}%
Since we have $n_{ij}=2,$ for $i,j\in \mathscr{I}_{1:3},$ the state
transformation matrix is $T=\left[
\begin{array}{ccc}
T_{_{1,1}} & T_{_{1,2}} & T_{_{1,3}} \\
T_{_{1,2}}^{\mathrm{T}} & T_{_{2,2}} & T_{_{2,3}} \\
T_{_{1,3}}^{\mathrm{T}} & T_{_{2,3}}^{\mathrm{T}} & T_{_{3,3}}%
\end{array}%
\right] \in \mathbf{R}^{18\times 18},$ where, $T_{_{i,j}}$ with $i,j\in %
\mathscr{I}_{1:3}$ is generated from the structure of (\ref{TM}). The three
local inputs are assumed to satisfy $u_{i}(k)\in \left[ -4,4\right] ,$ for $%
i\in \mathscr{I}_{1:3}.$ The original local stage cost of each subsystem is
defined by $R_{1}=R_{2}=1,$ $R_{3}=0.5,$ $Q_{2}=2Q_{1},$ $Q_{3}=0.1Q_{1}$,
where, $Q_{1}=\left[
\begin{array}{ccc}
10I & 2I & 3E \\
2I & 15I & 3E \\
3E & 3E & 20I%
\end{array}%
\right] ,$ with $E$ standing for square matrices having all its entries as
1, and $P_{i}$ is to be designed. The relative weights of the local cost
functions are $\rho _{1}=1,$ $\rho _{2}=0.5$, $\rho _{3}=1.$ We follow the
procedure in Section 3 and apply the state transformation to the original
system and the global cost function. For each open-loop unstable virtual
subsystem, a local stabilizing LQR controller is designed with the following
weights $Q_{L_{1}}=10I,Q_{L_{2}}=5I,Q_{L_{3}}=0.2I;$ $%
R_{L_{1}}=R_{L_{2}}=0.1,R_{L_{3}}=0.01.$ It can be verified that the unit
ball of $6$ dimensions around the origin becomes a positive invariant set
for the three virtual subsystems in closed-loop with their local LQR
controllers, respectively. Therefore, this set has been enforced as a
terminal constraint in the local optimization problems (\ref{local}). The
terminal costs $P_{i}$ in (\ref{localcost}) are selected so that the
condition (\ref{requirement1}) and the requirement that $P_{i}\in \mathbf{R}%
^{6\times 6}>0$ hold simultaneously. The prediction horizon is chosen to be $%
N=8$.

(i) We firstly compared the distributed solution without iterations and the
centralized solution. The initial states of the original subsystems are
taken to be $x_{1}=\left[
\begin{array}{cccccc}
-10 & -4 & 9 & 7 & 8 & 5%
\end{array}%
\right] ^{\mathrm{T}},$ $x_{2}=\left[
\begin{array}{cccccc}
-8 & -5 & 7 & 3 & 3 & 6%
\end{array}%
\right] ^{\mathrm{T}},$ $x_{3}=\left[
\begin{array}{cccccc}
-5 & -6 & 8 & -9 & 8 & 3%
\end{array}%
\right] ^{\mathrm{T}}$. At each sampling instant, we compute the value of
the global cost (GC) with the distributed solutions without iterations and
the centralized solution, respectively. Note that when computing the
distributed solutions without iterations, the coupled parts in the global
cost (as in (\ref{local}) for the case with 2 subsystems) are ignored. To
better compare the distributed solution without iterations against the
centralized solution, we will also compare the cost value of the coupled
parts with either solution, respectively. To make a fair comparison, the
state information for both centralized and distributed optimization problems
at each sampling instant is updated by the distributed solution, i.e., the
states are enforced to be the same for both optimization problems and it is
the ways the two solutions being computed make the difference in the value
of the GC and the coupled parts. The simulation results are illustrated in
Figure 1. It can be seen from Figure 1 that the distributed solution always
renders slightly worse performance than the centralized solution for the GC.
Nonetheless, the global performance loss is very moderate and remains within
$0.5\%$. Regarding the coupled cost (CC), it can be observed that its value
can change sign due to the discussions made after (\ref{modcent}). The value
of the coupled cost with the distributed solution is always larger than that
with the centralized solution. This is not unexpected, since the coupled
part has been included in the centralized optimization but not in the
distributed optimizations.
\begin{figure}[tbph]
\centering
\includegraphics[width=0.4\textwidth,bb=5 205 580
640]{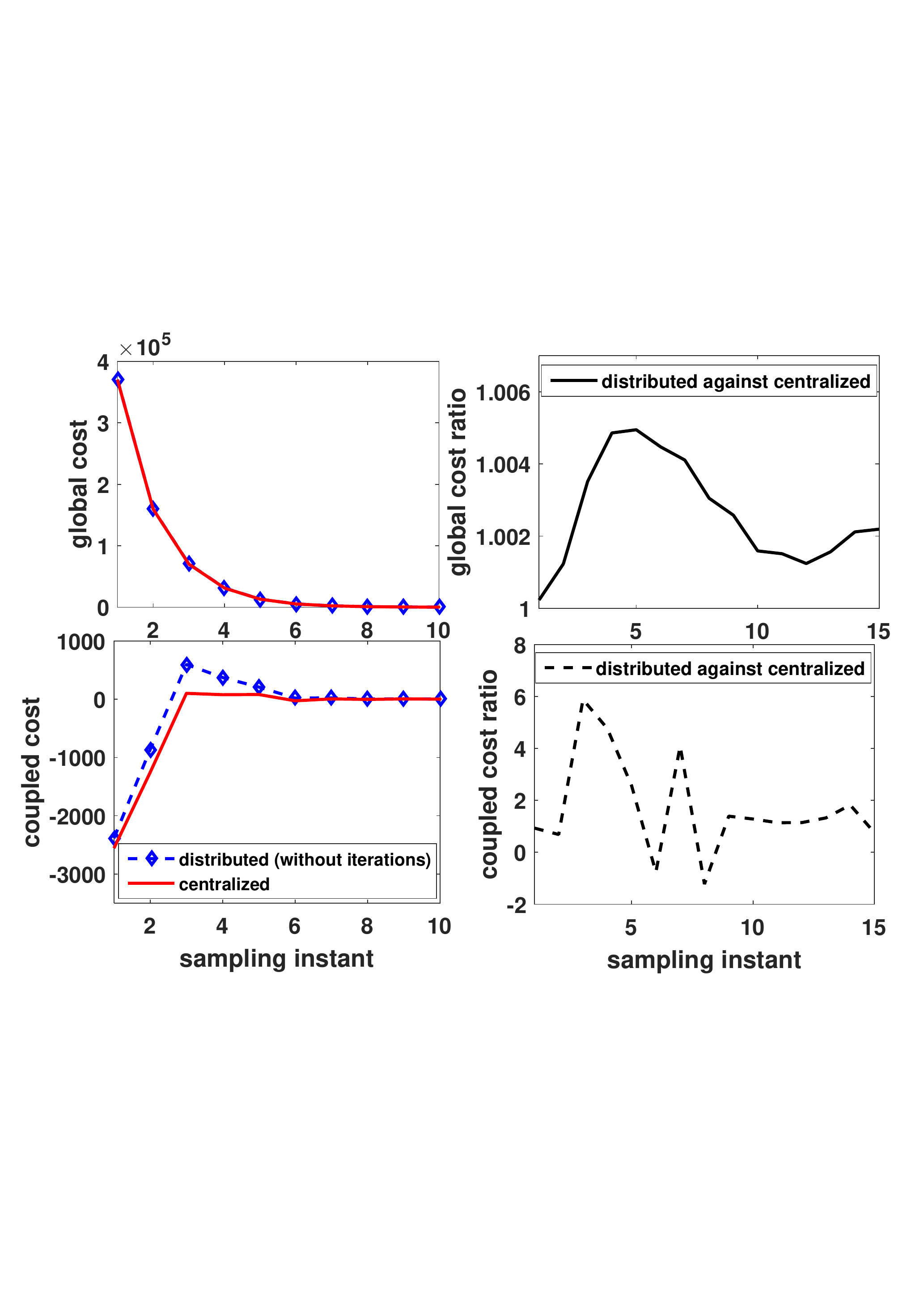}
\caption{Value evolution of the global cost and the coupled part with
centralized (red solid line) and distributed inputs without iterations (blue
dashed lines with diamonds), and their respective ratio (distributed against
centralized)}
\end{figure}
Note that in the above simulation, the distributed solutions are computed
via optimization problems (\ref{local}), i.e., the coupled parts $\overline{%
\mathscr{V}}_{i}^{b}$and $\overline{\mathscr{V}}_{i}^{c}$ in (\ref{finalcost}%
) have not been considered. However, the values of the CC have been added
when calculating the value of the GC with distributed solutions. To
understand better the performance loss of the distributed solution without
iterations, we draw 1000 sets of initial states whose entries are random
variables between $[-8,8].$ The average global performance loss from these
1000 random initial conditions is obtained to be within $0.5\%$ and the
worse global performance loss is less than $2\%.$ Note that compared to the
centralized solution, the local distributed solution without iterations only
needs $50\%$ (for this particular example) of the global state information,
and does not require the information of other local inputs.

(ii) We next compare the distributed solution without iteration, standard
cooperative solution (with iterations) and the centralized solution. For
this purpose, we take the initial conditions of the original subsystems to
be $x_{1}=\left[
\begin{array}{cccccc}
10 & 10 & 8 & 6 & -6 & 6%
\end{array}%
\right] ^{\mathrm{T}}$, $x_{2}=\left[
\begin{array}{cccccc}
10 & 2 & 3 & 5 & 3 & 6%
\end{array}%
\right] ^{\mathrm{T}}$, $x_{3}=\left[
\begin{array}{cccccc}
6 & -4 & 4 & 2 & 2 & 3%
\end{array}%
\right] ^{\mathrm{T}}$. For the first three sampling instants we use the
distributed solution without iterations to initialize the optimization
procedure. For the three strategies, we use the state information at the
fourth sampling instant as the initial conditions, i.e., the initial
condition is chosen to be the same for three different methods. Also, the
shifted input sequences of the distributed solution without iterations are
used as warm starts in obtaining the standard cooperative distributed MPC
solution with iterations.
\begin{table}[t]
\caption{Cost comparison of different strategies}
\label{tb2:comparison}\renewcommand{\arraystretch}{1.3} \centering%
\begin{tabular}{ccccc}
\hline
Method & GC & GC loss & CC & CC loss \\ \hline
Centralized & 5.1029$\cdot 10^{3}$ & 0 & 178.3 & 0 \\
5 iters & 5.1151$\cdot 10^{3}$ & $0.24\%$ & $216.5$ & $21.46\%$ \\
4 iters & 5.1218$\cdot 10^{3}$ & $0.37\%$ & $221.1$ & $24.01\%$ \\
3 iters & 5.1383$\cdot 10^{3}$ & $0.69\%$ & $228.3$ & $28.04\%$ \\
2 iters & 5.2018$\cdot 10^{3}$ & $1.94\%$ & $243.5$ & $36.56\%$ \\
1 iter & 5.5912$\cdot 10^{3}$ & $9.57\%$ & $289.2$ & $62.20\%$ \\
no iters & 5.1671$\cdot 10^{3}$ & $1.26\%$ & $303.8$ & $70.38\%$%
\end{tabular}%
\end{table}
We compare the value of the GC and CC associated with distributed solution
(without and with iterations) against the centralized solution. The results
are summarized in the Table 2. From Table 2, it can be seen that the value
of GC with the distributed solution without iterations is smaller than that
of standard cooperative distributed solution with 1 iteration. Moreover, the
strategy with iterations outperforms the case without iterations in that the
value of the CC with iterations is smaller. When more iterations are
conducted, the performance of the solution with iterations improves and
becomes better than the solution without iterations. Besides, as more
iterations are performed, the performance of the distributed strategy
improves, and with just 5 iterations, its global performance loss reduces to
less than $0.5\%$. One can also verify that when a large number of
iterations is conducted, the standard cooperative distributed solution
converges to the centralized solution.
\begin{figure}[tbph]
\centering
\includegraphics[width=0.4\textwidth,bb=65 305 575 605]{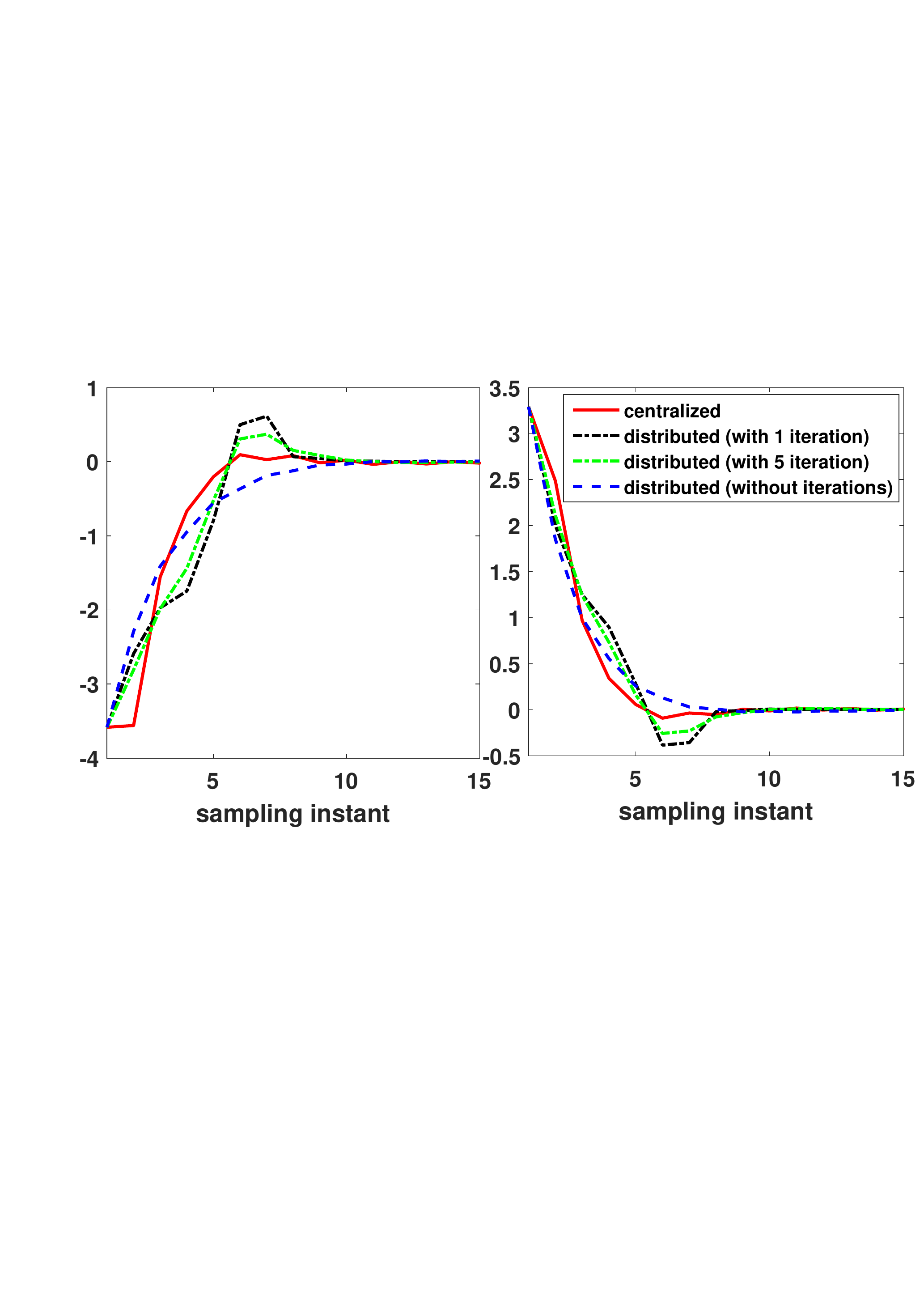}
\caption{Closed-loop trajectories of $x_{23}$ with different strategies}
\end{figure}

(iii) Finally, we compare the closed-loop trajectories of the system with
different control strategies, respectively, using the initial conditions of
the second simulation (ii). For the first sampling instant, we use
distributed solution without iterations to initialize the optimization
procedure. From the second sampling instant, the state information at each
sampling instant is updated by different strategies, respectively. In Figure
2, we show the closed-loop trajectories of $x_{23}$ with different control
strategies. It can be seen that the centralized solution offers the best
performance among all the strategies; the standard cooperative distributed
solution with 5 iterations can render a slightly quicker response than the
solution without iterations and with 1 iteration. Also, the response of the
solution with 1 iteration is slightly slower than that of the solution
without iterations.
\begin{table}[t]
\caption{Execution time comparison of different strategies}%
\renewcommand{\arraystretch}{1.3} \centering%
\begin{tabular}{ccc}
\hline
Method & Worse case & Average \\ \hline
Centralized & 0.0237s & 0.0085s \\
5 iters & 0.0373s & 0.0257s \\
1 iter & 0.0069s & 0.0055s \\
no iters & 0.0237s & 0.0094s%
\end{tabular}%
\end{table}
We have run this set of simulation for 200 sampling instants and solved the
QPs using quadprog in Matlab on a desktop computer with the processor Intel
Xeon Processor E5-2650 v2 (processor base frequency 2.6 Ghz). A summary
comparison of the computation time of different strategies is shown in Table
3. Note that we choose the worse case scenario among all the local inputs
for calculating the execution time for the solutions with iterations at each
sampling instant. To be more specific, for solutions without iterations and
with 1 iterations, if the execution time at a certain sampling instant, for
the three local inputs is $t_{1}$, $t_{2}$, $t_{3}$, respectively, we then
choose the largest one of the three as the computation time for the current
sampling instant. For the standard cooperative distributed solution with 5
iterations, the execution time at a certain iteration within a certain
sampling interval, for the three local inputs is $t_{ij}$ with $i\in %
\mathscr{I}_{1:3}$, $j\in \mathscr{I}_{1:5}$, we choose the largest value of
$\sum\limits_{j=1}^{5}t_{ij}$, given $i\in \mathscr{I}_{1:3}$, as the
computation time for the current sampling instant. A similar procedure is
carried out at each sampling instant, the worse case and the averge
computation time in Table 3 is then calculated from the record for the 200
simulations. From Figure 2 and Table 3, we can see that the standard
cooperative distributed solution with iterations generally takes more time
to compute with potentially improved performance. It should also be noted
that the worse case computation time for the centralized solution is not
significantly different from that of the solutions without iterations or
with 5 iterations. This is not very surprising given the moderate complexity
of the example. Also, as it has been remarked in \cite{Rawlings2010}, a
major argument to develop distributed methods, as opposed to a centralized
solution, is often not computational, but organizational, i.e., in the
latter case, all subsystems rely on a central decision maker to coordinate
and maintain plant-wide actions, leading to organizational inefficiency,
implementation and maintenance difficulties.

Based on all the previous simulations, we conclude that the distributed
solution without and with iterations have their respective advantages. We
suggest that one should adopt a combination of these two implementation
methods. For example, one could use the solution without iterations to
initialize the distributed algorithm; and perform iterations for certain
steps in the transient process to potentially reduce the coupled costs as
well as improve the global performance; as the states approach the origin,
one could switch back to the solution without iterations. Note that as long
as the local weights are selected so that the condition (\ref{requirement1})
holds, a change of the implementation method during operation would not
affect closed-loop stability.

\subsection{Cooperative distributed control of a four-wheel vehicle}

To see the applicability of the proposed method to a practical problem, we
apply the technique to a four wheel drive (4WD) vehicle dynamics problem
with four independent wheel motors \cite{siampisVSD14}. Whilst the
application of MPC to vehicle dynamics problems is not new, the main
motivation of examining the cooperative distributed MPC technique here is to
potentially improve the vehicle stability by distributedly controlling the
torques to four independent wheel motors while considering their respective
impact on the whole vehicle. Note that the vehicle and tyre models used in
this example were originally proposed in \cite{Velenis2011} and are very
similar to those in \cite{siampisVSD14}. The interested reader can refer to
\cite{siampisVSD14}-\cite{Velenis2011} for more technical details such as
the description of the equations of motion, tyre models, etc. The vehicle
under consideration in this example is a small sports car with parameters
reported in \cite{siampisVSD14}. In order to control four wheel motors in a
distributed way, we partition the centralized discrete-time model into 4
subsystems by assuming that the information of the vehicle velocity $V$, the
sideslip angle $\beta $, the yaw rate $\dot{\psi}$, and the four motors'
respective torque information is available at each local subsystem via the
vehicle bus communication network. We also assume that a constant steering
input of $\delta ^{ss}=10deg$ is applied on both the front wheels. To apply
the proposed cooperative distributed MPC technique, we linearize the vehicle
dynamics model around the steady state (SS) cornering condition $%
V^{ss}=11m/s,\beta ^{ss}=0.717deg,\dot{\psi}^{ss}=44.45deg,$ and derive a
centralized linear discrete-time model with sampling time 0.1 s. The
prediction horizon is selected to be {$N=8$}. The wheel torque is required
to stay within the specified motor torque limits of $\left[ -400,400\right] $%
.
\begin{figure}[tbph]
\centering
\includegraphics[width=0.25\textwidth,bb=95 210 440
680]{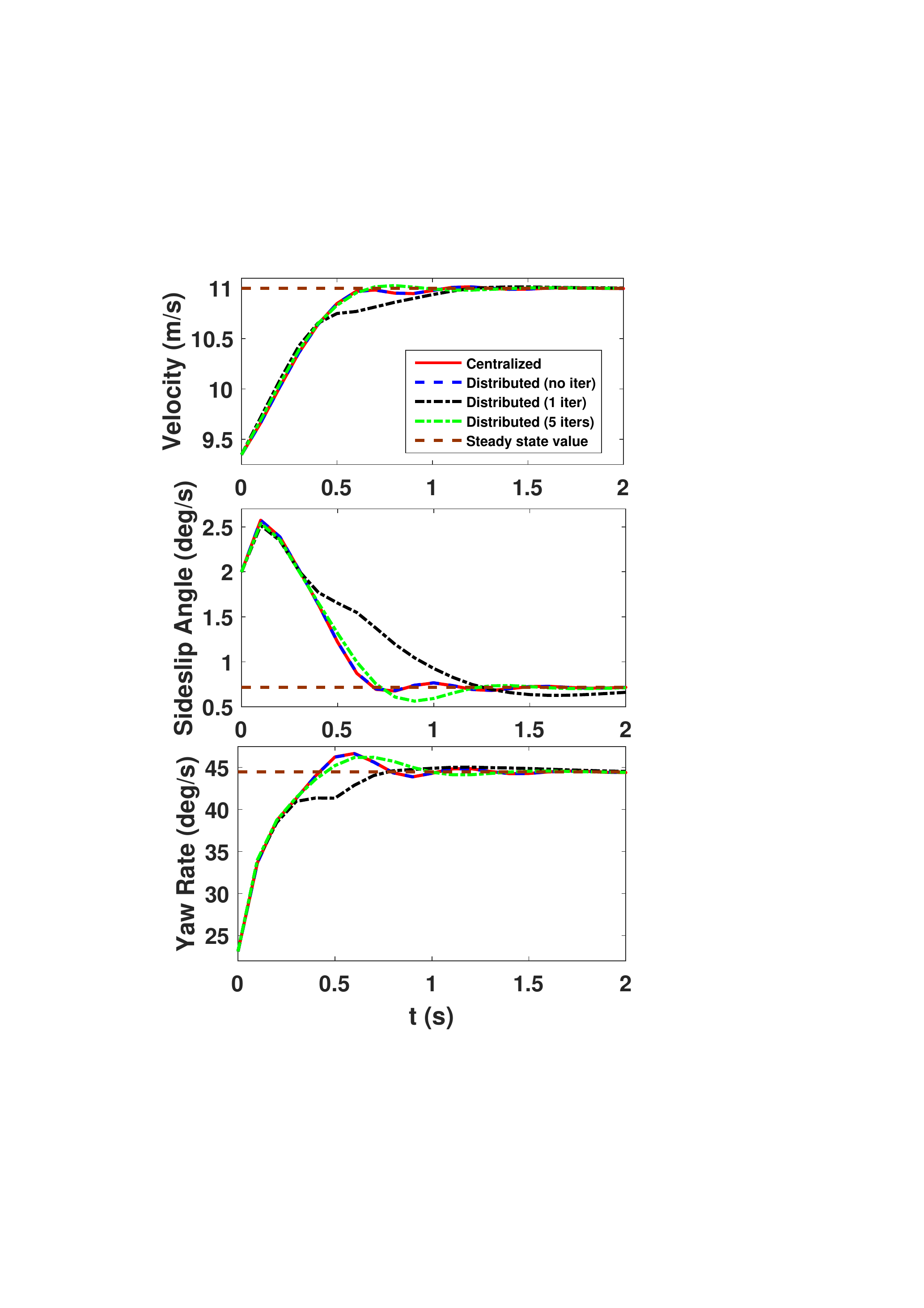}
\caption{Velocity, sideslip angle, yaw rate response with different
strategies}
\end{figure}
We compare four different strategies, i.e., centralized, distributed (no
iterations), distributed (1 iterations), and distributed (5 iterations),
against each other with the original nonlinear vehicle dynamics model in
closed-loop. For the first step, with a given state information, we use the
distributed solution without iterations to initialize the whole procedure.
Starting from the second sampling instant, the closed-loop state information
is updated by the four different methods, respectively. The responses of
vehicle velocity, sideslip angle and yaw rate are shown in Figure 3. It can
be seen from Figure 3 that the four different methods all offer fairly quick
responses. More closely, the distributed solution with 1 iteration renders
comparable but slightly worse performance than that of centralized,
distributed without iterations, and distributed with 5 iterations. We notice
similar patterns in the wheel speed response and due to limited space, these
results are not shown here. Note that in Figure 3, the vehicle dynamics
response with the centralized method and the distributed solution without
iterations are close to, but not the same with, each other.

\section{\label{conclusion}{Conclusions}}

This paper has presented a Divide and Conquer approach to the design of
cooperative distributed MPC. By recognizing the inherent structure of the
problem setup in {distributed MPC}, we propose to apply a state
transformation to the original cooperative {problem} so that the coupling
effects in the original problem setup can be dealt with more effectively.
Implications of the state transformation and sufficient conditions for
closed-loop stability are thoroughly discussed. For the case without
iterations, the proposed method allows one to compute the local inputs
independently from each other with partial plant-wide state information,
thereby potentially saving a large amount of communication overhead. The
framework can also be implemented with iterations, thereby keeping the
merits of the standard cooperative techniques with iterations. Starting from
the case of 2 systems, we have also presented the generalization to $M$
systems{, and two numerical application examples which show the benefits of
the proposed framework.}


\section*{Acknowledgment}

The authors thank Dr. Efstathios Velenis at Cranfield University, for kindly
providing the original code in \cite{Velenis2011} which we used to complete
the simulation work in the second example.





\end{document}